\def\z{{\bm z}}
\def\w{{\bm w}}
\DeclareMathOperator{\card}{card}
\DeclareRobustCommand\widecheck[1]{{\mathpalette\@widecheck{#1}}}
\def\@widecheck#1#2{
 \setbox\z@\hbox{\m@th$#1#2$}
 \setbox\tw@\hbox{\m@th$#1
 \widehat{
 \vrule\@width\z@\@height\ht\z@
 \vrule\@height\z@\@width\wd\z@}$}
 \dp\tw@-\ht\z@
 \@tempdima\ht\z@ \advance\@tempdima2\ht\tw@ \divide\@tempdima\thr@@
 \setbox\tw@\hbox{%
 \raise\@tempdima\hbox{\scalebox{1}[-1]{\lower\@tempdima\box
\tw@}}}
 {\ooalign{\box\tw@ \cr \box\z@}}}
\numberwithin{equation}{section}
\newtheorem{Theorem}{Theorem}[section]
\newtheorem{Corollary}[Theorem]{Corollary}
\newtheorem{Lemma}[Theorem]{Lemma}
\newtheorem{Proposition}[Theorem]{Proposition}
{ \theoremstyle{definition}
\newtheorem{Definition}[Theorem]{Definition}}
\begin{document}
\allowdisplaybreaks

\newcommand{\arXivNumber}{1711.04893}

\renewcommand{\PaperNumber}{049}

\FirstPageHeading

\ShortArticleName{Jacobi--Trudi Identity in Super Chern--Simons Matrix Model}
\ArticleName{Jacobi--Trudi Identity in Super Chern--Simons\\ Matrix Model}

\Author{Tomohiro FURUKAWA and Sanefumi MORIYAMA}
\AuthorNameForHeading{T.~Furukawa and S.~Moriyama}
\Address{Department of Physics, Osaka City University, Osaka 558-8585, Japan}
\Email{\href{mailto:furukawa@sci.osaka-cu.ac.jp}{furukawa@sci.osaka-cu.ac.jp}, \href{mailto:moriyama@sci.osaka-cu.ac.jp}{moriyama@sci.osaka-cu.ac.jp}}

\ArticleDates{Received January 19, 2018, in final form May 10, 2018; Published online May 18, 2018}

\Abstract{It was proved by Macdonald that the Giambelli identity holds if we define the Schur functions using the Jacobi--Trudi identity. Previously for the super Chern--Simons matrix model (the spherical one-point function of the superconformal Chern--Simons theo\-ry describing the worldvolume of the M2-branes) the Giambelli identity was proved from a~shifted version of it. With the same shifted Giambelli identity we can further prove the Jacobi--Trudi identity, which strongly suggests an integrable structure for this matrix model.}

\Keywords{Jacobi--Trudi identity; ABJM theory; Chern--Simons theory; matrix model; integrable system}

\Classification{05E05; 37K10}

\section{Introduction}

More than one hundred years ago, two remarkable identities for the Schur polynomial were found. One of them is the Giambelli identity~\cite{G}, stating that the Schur polynomial in any representation is expressed as a determinant of the Schur polynomials in the hook representation. Another is the Jacobi--Trudi identity~\cite{J}, stating that the Schur polynomial is expressed as a~determinant of the complete symmetric polynomials.
It is interesting to see how these two identities are related to each other.

In \cite{M9,Macdonald} in discussing generalizations of the Schur polynomial, in the final generalization (known as {\it the ninth variation} \cite{M9}), which combines many generalizations given previously, Macdonald regards the Jacobi--Trudi identity as the definition of the Schur function in an arbitrary representation out of the complete symmetric functions. Namely, we prepare a set of functions as the complete symmetric functions and define the generalized Schur function in any representation out of them using the Jacobi--Trudi identity (possibly along with an automorphism). Then, it was found that the Giambelli identity follows simply within this setup.

\looseness=-1
It was found \cite{BOS,Tierz} that some one-point functions of the Schur polynomial in matrix models also enjoy the Giambelli identity. Also, following \cite{shifted} which rewrites the one-point function in the ABJM matrix model \cite{ABJM} into an expression which is reminiscent of the Giambelli identity with a shift, in \cite{GC} we were able to prove the original Giambelli identity for the ABJM matrix model.

From the viewpoint of the Macdonald's ninth variation of the Schur function \cite{M9}, it is natural to ask whether the Giambelli identity which is satisfied by the ABJM matrix model is lifted to the Jacobi--Trudi identity, and if yes, what the automorphism is.

In this paper, we shall give a positive answer to this question. In the ABJM matrix model there is a natural choice of the automorphism: the shift in the fractional-brane background. Then, the Jacobi--Trudi identity strongly suggests that the ABJM matrix model has the structure of the integrable hierarchy. Before going into the proof of our statement, we shall make the above motivation clearer by preparing some mathematical concepts and explaining the physical backgrounds.

\subsection{Mathematical preparation}

Let us start with some preparations of mathematical backgrounds.\footnote{We mostly follow the terminology and the notation given in \cite{Macdonald} except for the following points. We use the uppercase characters for the variables $L$ and $R$ which are later related to the matrix size in the determinant formula. We denote the transpose of the Young diagram $\lambda$ by $\lambda^\text{T}$ instead of $\lambda'$ and reserve the primes for the concept related to shifting the diagonal line.} There are a few major notations to express the Young diagram $\lambda$
\begin{gather*}
\lambda=[\lambda_1,\lambda_2,\dots,\lambda_L]=(\alpha_1,\alpha_2,\dots,\alpha_R\,|\,\beta_1,\beta_2,\dots,\beta_R).
\end{gather*}
The first one is the standard one listing all of the box numbers $\lambda_i$ in the $i$-th row (with $L=\card\{i|\lambda_i>0\}$). The second one is the Frobenius notation obtained by counting the horizontal boxes and the vertical boxes from the main diagonal as arm lengths and leg lengths
\begin{gather}
\alpha_i=\lambda_i-i,\qquad \beta_j=\big(\lambda^\text{T}\big)_j-j,\qquad R=\card\{i\,|\,\alpha_i\ge 0\}=\card\{j\,|\,\beta_j\ge 0\}. \label{frobenius}
\end{gather}
Finally, we propose a notation which is not very common. Let $M$ be an integer. Then, the last one is again the Frobenius notation but obtained by shifting the diagonal line by $M$ to the right. Namely, if $M\ge 0$ we define
\begin{gather}
\alpha'_i=\lambda_i-i-M,\qquad \beta'_j=\big(\lambda^\text{T}\big)_j-j+M,\nonumber\\
R'=\card\{i\,|\,\alpha'_i\ge 0\}=\card\{j\,|\,\beta'_j\ge 0\}-M,\label{armleg}
\end{gather}
resulting in $M$ more leg lengths than arm lengths,
\begin{gather}
\lambda=\big(\alpha'_1,\alpha'_2,\dots,\alpha'_{R'}\,|\,\beta'_1,\beta'_2,\dots,\beta'_{M+R'}\big),
\end{gather}
while if $M=-|M|\le 0$ we define\footnote{Considering the continuation from $M\ge 0$, it may be more convenient to define $R'=\card\{i\,|\,\alpha'_i\ge 0\}=\card\{j\,|\,\beta'_j\ge 0\}+|M|$ for $M=-|M|\le 0$.
We adopt this definition because it is more intuitive in our proof.}
\begin{gather}
\alpha'_i=\lambda_i-i+|M|,\qquad \beta'_j=\big(\lambda^\text{T}\big)_j-j-|M|,\nonumber\\
R'=\card\{i\,|\,\alpha'_i\ge 0\}-|M|=\card\{j\,|\,\beta'_j\ge 0\}, \label{armlegnegM}
\end{gather}
resulting in $|M|$ more arm lengths than leg lengths
\begin{gather}
\lambda=\big(\alpha'_1,\alpha'_2,\dots,\alpha'_{|M|+R'}\,|\,\beta'_1,\beta'_2,\dots,\beta'_{R'}\big),
\end{gather}
To equate the numbers of the arm lengths and the leg lengths, we often prepare the auxiliary arm lengths
\begin{gather}
\widetilde\alpha'_i=i-M-1,\qquad 1\le i\le M, \label{aux}
\end{gather}
for $M\ge 0$ and the auxiliary leg lengths
\begin{align}
\widetilde\beta'_j=j-|M|-1,\qquad 1\le j\le|M|, \label{auxnegM}
\end{align}
for $M=-|M|\le 0$. We call the last notation the $M$-shifted Frobenius notation. See Fig.~\ref{young} for an example.
\begin{figure}[!ht]
\centering\includegraphics[scale=0.45,angle=-90]{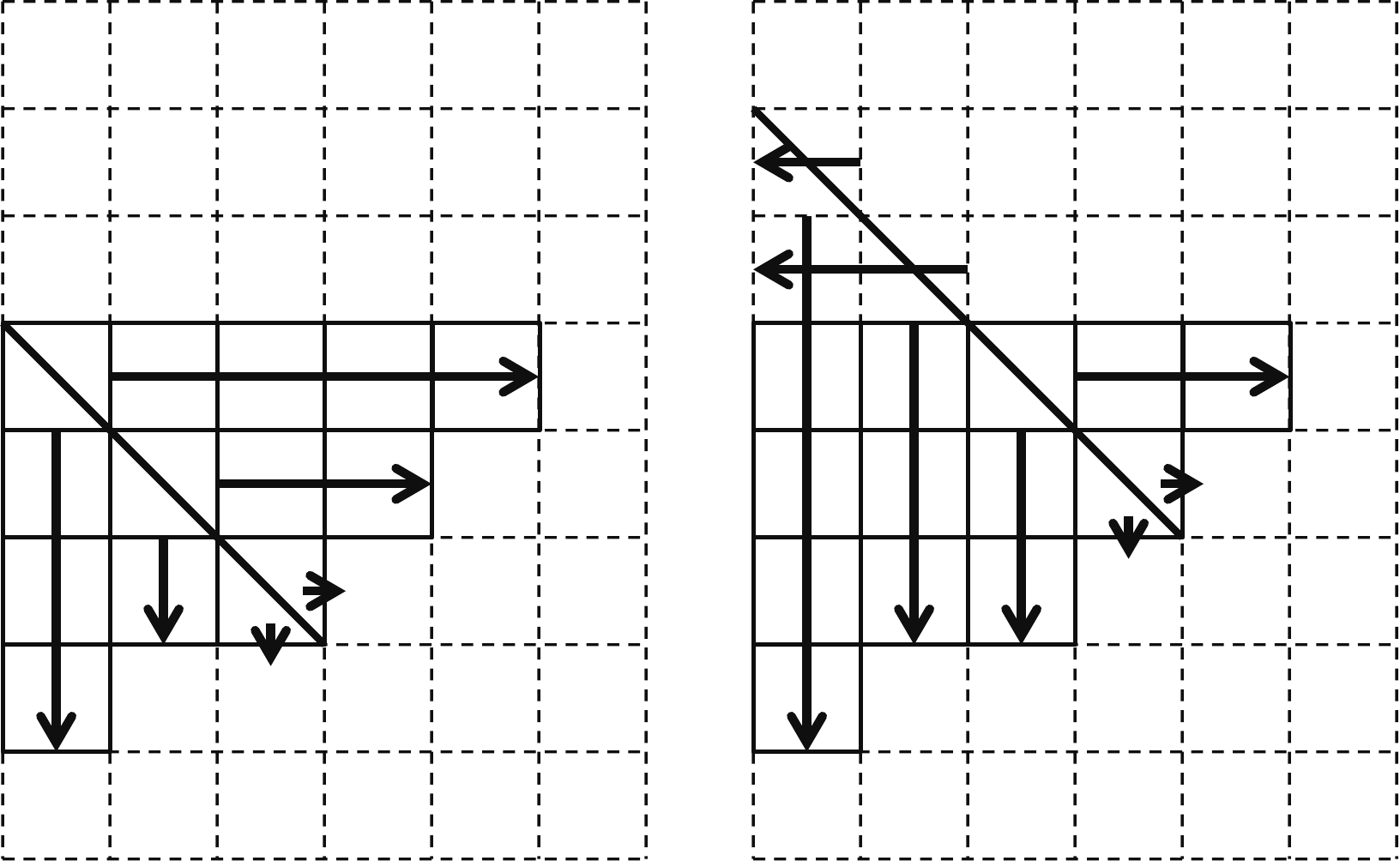}
\caption{The Frobenius notation (left) and the shifted Frobenius notation with $M=2$ (right) for $\lambda=[5,4,3,1]$. The Frobenius notation is obtained by counting the horizontal and vertical boxes from the diagonal, $\lambda=(4,2,0\,|\,3,1,0)$, while the shifted Frobenius notation is obtained similarly with the diagonal shifted by $M=2$ to the right, $\lambda=(2,0\,|\,5,3,2,0)$.}\label{young}
\end{figure}

We consider\footnote{The typical examples of the functions $S_\lambda$, $S^M_\lambda$ are the Schur polynomial and the super Schur polynomial, though in general (and in our application in the next subsection) these functions do not have to be symmetric polynomials.} a set of functions $S_\lambda$ labelled by the Young diagram $\lambda$, which are normalized by $S_\varnothing=1$, or a set of functions $S^M_\lambda$ labelled by both the Young diagram $\lambda$ and the integer $M$, which are normalized by $S^0_\varnothing=1$. Otherwise, we replace $S_\lambda$ and $S^M_\lambda$ respectively by $S_\lambda/S_\varnothing$ and $S^M_\lambda/S^0_\varnothing$.

In this paper we will discuss the following three properties for $S_\lambda$ or $S^M_\lambda$.
For the explanation, we choose ``the complete symmetric functions'' $H_\ell^M$ ($\ell\in{\mathbb Z}_{\ge 0}$, $M\in{\mathbb Z}$) to be independent indeterminates with the understanding $H_\ell^M=0$ for $\ell<0$ and the normalization $H_0^M=1$, and consider a ring automorphism $\varphi$ generated by $\varphi\big(H_\ell^M\big)=H_\ell^{M+1}$ for all $\ell$ and $M$.

\begin{Definition}The function $S^M_\lambda$ (normalized by $S^0_\varnothing=1$, otherwise $S^M_\lambda/S^0_\varnothing$) is {\it shifted Giambelli} compatible if there exist a set of functions $S'_{(\alpha'\,|\,\beta')}$ labelled by two integers ${\mathbb Z}\times{\mathbb Z}\setminus{\mathbb Z}_{<0}\times{\mathbb Z}_{<0}$, so that the function $S^M_\lambda$ for any Young diagram $\lambda$ is expressed as
\begin{gather}
S^M_\lambda=\det\begin{pmatrix}
\bigl(S'_{(\widetilde\alpha'_i|\beta'_j)}\bigr)
_{\begin{subarray}{c}1\le i\le M\\1\le j\le M+R'\end{subarray}}\\
\bigl(S'_{(\alpha'_i|\beta'_j)}\bigr)
_{\begin{subarray}{c}1\le i\le R'\\1\le j\le M+R'\end{subarray}}
\end{pmatrix},\label{sG}
\end{gather}
when $M\ge 0$ while expressed as
\begin{gather}
S^M_\lambda=\det\begin{pmatrix}
\bigl(S'_{(\alpha'_i|\widetilde\beta'_j)}\bigr)
_{\begin{subarray}{c}1\le i\le|M|+R'\\1\le j\le|M|\end{subarray}}&
\bigl(S'_{(\alpha'_i|\beta'_j)}\bigr)
_{\begin{subarray}{c}1\le i\le|M|+R'\\1\le j\le R'\end{subarray}}
\end{pmatrix},\label{sGnegM}
\end{gather}
when $M=-|M|\le 0$. Here $\alpha'_i$, $\widetilde\alpha'_i$ and $\beta'_j$, $\widetilde\beta'_j$ are respectively (auxiliary) arm lengths and (auxiliary) leg lengths in the $M$-shifted Frobenius notation defined in~\eqref{armleg}--\eqref{auxnegM}.
\end{Definition}

\begin{Definition}The function $S_\lambda$ (normalized by $S_\varnothing=1$, otherwise $S_\lambda/S_\varnothing$) is {\it Jacobi--Trudi} compatible\footnote{In \cite{Macdonald} the Jacobi--Trudi identity is defined also by extending the matrix size in \eqref{JT} to be greater than $L$. Since this extension does not make any differences essentially, we restrict the matrix size to be $L$. The Jacobi--Trudi identity defined with the automorphism $\varphi$ plays an important role in the study of integrable models and is also called the quantum Jacobi--Trudi identity (see \cite{AKLTZ,HL,JR, KNS,KOS,NNSY,NN,T} for examples).} if there exist a set of functions $H_\ell^M$ labelled by two integers $\ell\in{\mathbb Z}_{\ge 0}$, $M\in{\mathbb Z}$ and normalized by $H_0^M=1$ with an automorphism $\varphi\big(H_\ell^M\big)=H_\ell^{M+1}$ so that the function is expressed by
\begin{gather}
S_\lambda=\det\bigl(\varphi^{j-1}H_{\lambda_i-i+j}\bigr)
_{\begin{subarray}{c}1\le i\le L\\1\le j\le L\end{subarray}},\label{JT}
\end{gather}
with $H_\ell=H_\ell^{M=0}$.
\end{Definition}

\begin{Definition}The function $S_\lambda$ (normalized by $S_\varnothing=1$, otherwise $S_\lambda/S_\varnothing$) is {\it Giambelli} compatible if it satisfies
\begin{gather*}
S_{\lambda}=\det\bigl(S_{(\alpha_i\,|\,\beta_j)}\bigr)
_{\begin{subarray}{c}1\le i\le R\\1\le j\le R\end{subarray}},
\end{gather*}
where $\alpha_i$, $\beta_j$ are the arm length and the leg length in the original Frobenius notation defined in~\eqref{frobenius}, $\lambda=(\alpha_1,\alpha_2,\dots,\alpha_R\,|\,\beta_1,\beta_2,\dots,\beta_R)$.
\end{Definition}

Several remarks follow. Note that for the case of $R'=0$ where $|M|$ is large enough so that the shifted diagonal does not intersect with the Young diagram, the block of $S'_{(\alpha'\,|\,\beta')}$ is missing and the shifted Giambelli compatibility resembles the Weyl formula. Also note that $S'_{(\alpha'\,|\,\beta')}$ and~$H_\ell$ appearing in the definitions coincide with $S'_{(\alpha'\,|\,\beta')}=S^0_{(\alpha'=\alpha\,|\,\beta'=\beta)}$ and $H_\ell=S_{[\ell]}$ respectively as obtained from an appropriate choice of~$\lambda$ and~$M$. The Jacobi--Trudi compatibility is utilized in~\cite{M9} to define an ultimate variation (known as the ninth variation) of the Schur function in an arbitrary representation from the complete symmetric functions $H_\ell=S_{[\ell]}$, containing many generalizations by specific choices of~$H^M_\ell$. The defining equation~\eqref{JT} can alternatively be expressed as
\begin{gather*}
S^0_\lambda=\det\bigl(H^{j-1}_{\lambda_i-i+j}\bigr)_{\begin{subarray}{c}1\le i\le L\\1\le j\le L\end{subarray}}, 
\end{gather*}
where $0$ in $S^0_\lambda$ indicates the construction from $H^{M=0}_\ell$. Then, the following two propositions relating these properties hold.

\begin{Proposition}If the function $S_\lambda$ is Jacobi--Trudi compatible, $S_\lambda$ is Giambelli compatible.
\end{Proposition}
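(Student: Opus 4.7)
The plan is to derive the Giambelli identity from the Jacobi--Trudi identity by a pivot-Laplace reduction of the $L\times L$ JT matrix $A_{ij}=\varphi^{j-1}H_{\lambda_i-i+j}$. The normalizations $H_0^M=1$ and $H_\ell^M=0$ for $\ell<0$ split the rows into $R$ \emph{arm rows} $i\le R$ (with $\lambda_i-i=\alpha_i\ge 0$) and $L-R$ \emph{leg rows} $i>R$; each leg row consists of zeros followed by a pivot $1$ in column $i-\lambda_i$ and then further nontrivial entries $\varphi^{j-1}H_{j-(i-\lambda_i)}$. The standard Maya-diagram correspondence identifies the set of leg-pivot columns $\{i-\lambda_i\}_{i>R}$, together with the ``virtual'' pivot columns $\{i\}_{i>L}$ coming from formally appending empty rows $\lambda_i=0$, as the complement of $\{\beta_j+1\}_{j=1}^R$ in $\mathbb Z_{>0}$, so that the columns in $\{1,\dots,L\}$ not used by leg-row pivots are precisely $\{\beta_1+1,\dots,\beta_R+1\}$.

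First I would apply a Laplace expansion of $\det A$ along the $L-R$ leg rows, together with column-clearing operations that use each pivot to eliminate post-pivot entries without changing the determinant. This reduces $\det A$ to a single $R\times R$ minor on the arm rows and on the uncovered columns $\{\beta_j+1\}_{j=1}^R$, up to a definite sign coming from the interleaving of arm and leg indices. Second, I would apply the identical pivot-Laplace reduction to the JT matrix of each hook partition $[\alpha_i+1,1^{\beta_j}]$: its single arm row together with its $\beta_j$ leg rows collapses under the pivots to a single value which I would identify with the corresponding $(i,j)$-entry of the reduced $R\times R$ arm minor of $A$. Assembling these two reductions yields
\[
S_\lambda=\det\bigl(S_{(\alpha_i|\beta_j)}\bigr)_{i,j=1}^R,
\]
which is the Giambelli identity.

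The main obstacle will be the combinatorial and sign bookkeeping. One has to verify carefully that the Laplace expansion along the leg rows, combined with the pivot clearings, really does collapse to the single column multi-index $\{\beta_j+1\}_{j=1}^R$ rather than leaving residual sums over alternative configurations; that the interleaving sign between arm and leg index orderings reproduces the Leibniz sign of the Giambelli determinant; and---most delicately---that the column-dependent automorphism shifts $\varphi^{j-1}$ thread through both the full reduction of $A$ and each individual hook reduction in a matching way, so that no spurious $\varphi$-factors remain when the two sides are compared entry by entry. Once this bookkeeping is in place the identity follows without further computation.
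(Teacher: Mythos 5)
There is a genuine gap in the second half of your reduction, namely in the claim that the pivot--Laplace collapse of the hook matrix for $[\alpha_i+1,1^{\beta_j}]$ produces ``the corresponding $(i,j)$-entry of the reduced $R\times R$ arm minor of $A$''. The first half is fine: the leg rows of $A$ do have the pivot structure you describe, the uncovered columns are indeed $\{\beta_j+1\}_{j=1}^R$, and Gaussian elimination with the leg pivots legitimately collapses $\det A$ to a single $R\times R$ minor. But the entries of that reduced minor are \emph{not} the hook functions $S_{(\alpha_i|\beta_j)}$, even up to sign, and they cannot be: the column-$(\beta_j+1)$ entry of $A$ is only modified by the leg pivots of $\lambda$ lying in columns $\{1,\dots,\beta_j\}\setminus\{\beta_{j'}+1\colon j'>j\}$, whereas the hook $(\alpha_i|\beta_j)$ has leg pivots in \emph{all} of columns $1,\dots,\beta_j$, so the two eliminations are driven by different pivot sets. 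The smallest counterexample is $\lambda=[2,2]$, where $L=R=2$, there are no leg rows at all, and your ``reduced arm minor'' is just the Jacobi--Trudi matrix
\[
\begin{pmatrix}H_2&\varphi H_3\\ H_1&\varphi H_2\end{pmatrix},
\qquad\text{while the Giambelli matrix is}\qquad
\begin{pmatrix}S_{(1|1)}&S_{(1|0)}\\ S_{(0|1)}&S_{(0|0)}\end{pmatrix}
=\begin{pmatrix}H_2\varphi H_1-\varphi H_3&H_2\\ H_1\varphi H_1-\varphi H_2&H_1\end{pmatrix}.
\]
The two determinants agree (both equal $H_2\varphi H_2-H_1\varphi H_3$), but the matrices are not related entry by entry in any column ordering or up to signs, so no amount of sign and $\varphi$ bookkeeping will rescue the entrywise identification. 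The whole content of the proposition is precisely the equality of these two determinants \emph{without} an entrywise match.

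What is missing is a genuine multilinear identity relating the two determinants. The paper itself supplies no argument here beyond citing Macdonald (Section~I.3, Example~21); the standard route is to expand each hook as $S_{(\alpha|\beta)}=\sum_{k\ge0}(-1)^kH_{\alpha+1+k}^{(\cdot)}E_{\beta-k}^{(\cdot)}$ using the $H$--$E$ orthogonality, factor the Giambelli matrix as a product of an $R\times N$ matrix of $H$'s and an $N\times R$ matrix of $E$'s, and then invoke Cauchy--Binet together with Jacobi's theorem on complementary minors of inverse matrices (equivalently, a Plücker/Bazin-type relation, which is also the engine behind the Laplace-expansion lemma the paper uses for its main theorem). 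If you want to salvage your pivot picture, the reduced arm minor should be compared to the Giambelli determinant through such a compound-determinant identity, not through matching individual entries.
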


\begin{proof}See \cite[Section I.3, Example 21]{Macdonald}.
\end{proof}

\begin{Corollary}If the function $S_\lambda$ is Jacobi--Trudi compatible, $S^M_\lambda$ defined by
\begin{gather}
S^M_\lambda=\det\bigl(H^{M+j-1}_{\lambda_i-i+j}\bigr)
_{\begin{subarray}{c}1\le i\le L\\1\le j\le L\end{subarray}},\label{JTM}
\end{gather}
is Giambelli compatible.
\end{Corollary}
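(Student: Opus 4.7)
The plan is to exhibit $S^M_\lambda$ as the image of $S_\lambda$ under the automorphism $\varphi^M$, and then simply apply that automorphism to the Giambelli identity already guaranteed by Proposition~1.3.

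First I would unwind the two definitions to see that $S^M_\lambda=\varphi^M(S_\lambda)$. Since $\varphi(H^M_\ell)=H^{M+1}_\ell$ and $H_\ell=H^0_\ell$, we have $\varphi^{j-1}(H_\ell)=H^{j-1}_\ell$ and hence $\varphi^M\varphi^{j-1}(H_\ell)=H^{M+j-1}_\ell$. Applying $\varphi^M$ inside the $L\times L$ determinant defining $S_\lambda$ in~\eqref{JT} (which is legitimate because $\varphi^M$ is a ring homomorphism and the determinant is a polynomial in its entries) therefore reproduces the right-hand side of~\eqref{JTM}. I would also record the normalization $S^M_\varnothing=1$: the matrix $(H^{M+j-1}_{j-i})$ is upper triangular with diagonal entries $H^{M+i-1}_0=1$, so its determinant is $1$.

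Next I would invoke Proposition~1.3: because $S_\lambda$ is Jacobi--Trudi compatible, it is Giambelli compatible, meaning
\begin{gather*}
S_\lambda=\det\bigl(S_{(\alpha_i\,|\,\beta_j)}\bigr)_{1\le i,j\le R}.
\end{gather*}
Applying the ring automorphism $\varphi^M$ term by term and pulling it through the determinant yields
\begin{gather*}
S^M_\lambda=\varphi^M(S_\lambda)=\det\bigl(\varphi^M S_{(\alpha_i\,|\,\beta_j)}\bigr)_{1\le i,j\le R}=\det\bigl(S^M_{(\alpha_i\,|\,\beta_j)}\bigr)_{1\le i,j\le R},
\end{gather*}
which is precisely the Giambelli identity for $S^M_\lambda$.

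There is really no obstacle here: the only thing to check carefully is that the identification $\varphi^M(S_\mu)=S^M_\mu$ holds for every Young diagram $\mu$, which I would verify once and then use for the hooks $\mu=(\alpha_i\,|\,\beta_j)$ appearing on the right. Equivalently (and essentially the same argument), one could observe that $S^M_\lambda$ is itself Jacobi--Trudi compatible with respect to the shifted ``complete symmetric functions'' $\widetilde H_\ell:=H^M_\ell$ and the same automorphism $\varphi$, and then apply Proposition~1.3 directly to $\widetilde H_\ell$.
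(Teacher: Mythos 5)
Your proposal is correct and amounts to the same reduction as the paper's one-line proof: the paper simply notes that $S^M_\lambda=\det\bigl(\varphi^{j-1}H^M_{\lambda_i-i+j}\bigr)$ is itself Jacobi--Trudi compatible with respect to the re-based functions $H^M_\ell$ and invokes the proposition, which is exactly the ``equivalent'' formulation you give at the end. Your primary phrasing via transporting the Giambelli identity by the ring automorphism $\varphi^M$ (legitimate since $\varphi$ is invertible on the independent indeterminates $H^M_\ell$, so $M<0$ causes no trouble) is just the same argument seen from the other side.
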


\begin{proof}Since $S^M_\lambda=\det\bigl(\varphi^{j-1}H^M_{\lambda_i-i+j}\bigr)_{\begin{subarray}{c}1\le i\le L\\1\le j\le L\end{subarray}}$ constructed from $H^M_\ell$ is also Jacobi--Trudi compatible, the Giambelli compatibility follows from the above proposition.
\end{proof}

\begin{Proposition}For the shifted Giambelli compatible function $S^M_\lambda$, the Giambelli compatibility holds for a fixed integer $M$,
\begin{gather}
S^M_{\lambda}=\det\bigl(S^M_{(\alpha_i\,|\,\beta_j)}\bigr)_{\begin{subarray}{c}1\le i\le R\\1\le j\le R\end{subarray}}.\label{giambelli}
\end{gather}
\end{Proposition}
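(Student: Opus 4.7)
The strategy is to apply the shifted Giambelli identity twice --- once to $S^M_\lambda$ on the left-hand side of \eqref{giambelli}, and once to each hook entry $S^M_{(\alpha_i\,|\,\beta_j)}$ on the right-hand side --- and then to match the two resulting expressions by means of a Sylvester--Jacobi determinant-of-minors identity. After both substitutions, the two sides of \eqref{giambelli} are rewritten in terms of the common building blocks $S'_{(\alpha'\,|\,\beta')}=S^0_{(\alpha'\,|\,\beta')}$, and what remains is a purely algebraic determinantal identity about minors of a single master matrix.

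Concretely, I would focus first on $M\ge 0$, the case $M\le 0$ being handled identically with \eqref{sGnegM} in place of \eqref{sG} and the roles of arm lengths and leg lengths interchanged. Formula \eqref{sG} presents $S^M_\lambda$ as a single $(M+R')\times(M+R')$ determinant whose top $M$ rows are the universal auxiliary rows indexed by $\widetilde\alpha'_i=i-M-1$ (depending only on $M$, not on $\lambda$), while the remaining $R'$ rows correspond to the genuine shifted arms $\alpha'_i=\alpha_i-M$ with $\alpha_i\ge M$. A direct inspection of the $M$-shifted Frobenius data of a single hook $(\alpha\,|\,\beta)$ shows that its shifted-Giambelli matrix --- of size $(M+1)\times(M+1)$ when $\alpha\ge M$ and $M\times M$ when $\alpha<M$ --- shares those same auxiliary rows and moreover contains among its columns the same universal legs $M-1,M-2,\dots,0$ that already appear in the big matrix for $\lambda$; only one column, carrying the value $\beta+M$, is truly hook-dependent. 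Consequently each entry of the $R\times R$ Giambelli matrix $\bigl(S^M_{(\alpha_i\,|\,\beta_j)}\bigr)$ is a bordered minor of one and the same $(M+R)\times(M+R)$ master matrix $A$, all of whose bordered minors contain a common $M\times M$ auxiliary corner.

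At that point the classical Sylvester--Jacobi identity for determinants of bordered minors sharing a common corner expresses $\det\bigl(S^M_{(\alpha_i\,|\,\beta_j)}\bigr)$ as $\det A$ divided by a power of the auxiliary corner determinant. The latter equals $1$ by the normalization $S^0_\varnothing=1$ (after suitable reindexing), while $\det A$ coincides with the shifted Giambelli expression \eqref{sG} for $S^M_\lambda$, yielding \eqref{giambelli}. The main obstacle I foresee is the bookkeeping behind the ``sharing'' claim: one must show, from the definitions \eqref{armleg}--\eqref{auxnegM}, that the auxiliary rows and the universal leg columns of $\lambda$ genuinely recur inside every hook, and one must absorb the small hooks with $\alpha_i<M$ --- whose shifted-Giambelli determinants have one fewer row and column than the generic ones --- consistently into the auxiliary block before the Sylvester--Jacobi identity can be applied in one stroke. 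Once this combinatorial alignment is unwound carefully for both $\lambda$ and each of its hooks, the identity \eqref{giambelli} follows, and the analogous argument with \eqref{sGnegM} disposes of $M\le 0$.
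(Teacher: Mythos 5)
Your strategy is sound, but note first that the paper does not actually prove this Proposition in situ: for $M\ge 0$ it simply cites \cite{GC}, and for $M<0$ it deliberately postpones the statement, recovering it later from the main Theorem \eqref{thm} combined with the Corollary (Jacobi--Trudi implies Giambelli, \`a la Macdonald). So the relevant comparison is with \cite{GC}, whose argument is of the same Sylvester-type flavour as yours, organized there as an induction with Laplace expansions rather than a one-shot appeal to the classical identity. Your identification of the structure is correct: for $M\ge 0$ a hook $(\alpha\,|\,\beta)$ with $\alpha\ge M$ has shifted Frobenius data $(\alpha-M\,|\,\beta+M,M-1,\dots,0)$, so its matrix in \eqref{sG} is the $M\times M$ ``vacuum'' block (auxiliary rows $\widetilde\alpha'_k=k-M-1$ against legs $M-1,\dots,0$) bordered by one row and one column, and Sylvester's identity then yields $\det\bigl(S^M_{(\alpha_i|\beta_j)}\bigr)=\bigl(S^M_\varnothing\bigr)^{R-1}S^M_\lambda$, which is \eqref{giambelli} once the normalization is dropped as the paper tacitly does. (The corner determinant is $S^M_\varnothing$, not $S^0_\varnothing=1$ as you wrote; this only matters if the normalizations are kept.)

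Two steps still need to be carried out before this is a proof. First, the hooks with $0\le\alpha_i<M$ (there are $R-R'$ of them) are \emph{not} bordered minors of the common corner: their shifted-Giambelli matrices are $M\times M$, obtained by trading the corner column $M-\alpha_i-1$ for the column $\beta_j+M$. To apply Sylvester in one stroke you must replace the missing arm row by the unit vector supported on column $M-\alpha_i-1$, exactly the device the paper uses in \eqref{special} and \eqref{SSSS}. Second, having done so, you must verify that the Laplace expansion of the $(M+R)\times(M+R)$ master determinant along these unit rows deletes precisely the columns $M-\alpha_i-1$ for $R'<i\le R$ and leaves exactly the column set $\{\beta'_j\}_{j=1}^{M+R'}$ of $\lambda$ itself --- this is the complementarity of the shifted Frobenius coordinates --- and that the signs accumulated from reordering rows and columns cancel. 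Both checks succeed, but they constitute the actual content of the proof and are only named, not performed, in your sketch. Your reduction of $M\le 0$ to $M\ge 0$ via $\lambda\leftrightarrow\lambda^\text{T}$, $M\leftrightarrow-M$ is legitimate (the Giambelli matrix simply transposes, since hooks transpose to hooks) and is arguably more direct than the paper's route through the Jacobi--Trudi theorem.
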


\begin{proof}The proof for $M\ge 0$ is given in \cite{GC}. We do not prove for $M<0$ here since it follows from our theorem discussed later.
\end{proof}

\looseness=-1 Hence, it is natural to ask whether the Jacobi--Trudi compatibility is satisfied for the general shifted Giambelli compatible function $S^M_\lambda$, and if yes, what the automorphism is. After our adjustment of the notation, it is not difficult to imagine that the automorphism is that shifting~$M$ by $+1$,
\begin{gather}
\varphi\colon \ M\to M+1.\label{shift}
\end{gather}
In this paper we shall prove the following main theorem. Note that of course the identification of the automorphism~\eqref{shift} is non-trivial since $S^M_\lambda$ in the definition of the shifted Giambelli compatibility \eqref{sG}, \eqref{sGnegM} is totally unrelated to that defined from the automorphism in~\eqref{JTM}. After proving this theorem, the proposition of \cite{GC} \eqref{giambelli} follows directly by combining this theorem with the above corollary. Note also that our theorem is a natural generalization of the theorem (Jacobi--Trudi formula) in \cite{NNSY} which corresponds to the case of $M\ge 0$ and $R'=0$. We believe that our definition of the shifted Giambelli compatibility has clarified how to interpolate the condition between the cases of positive $M$ and negative $M$ with the absolute values $|M|$ being large enough. Here for simplicity, in the theorem, we drop the normalizations $S^M_\varnothing$, $H^{M+j-1}_{0}$ appearing in $S^M_\lambda/S^M_\varnothing$, $H^{M+j-1}_{\lambda_i-i+j}/H^{M+j-1}_{0}$ and assume tacitly that these normalizations are non-vanishing. The dual N\"agelsbach--Kostka compatibility $S^M_\lambda=\det\bigl(E^{M-j+1}_{(\lambda^\text{T})_i-i+j}\bigr) _{\begin{subarray}{c}1\le i\le A\\1\le j\le A\end{subarray}}$ with $A=\card\big\{i\,|\,\big(\lambda^\text{T}\big)_i>0\big\}$ can be proved in the parallel manner by the involution $\lambda\leftrightarrow\lambda^\text{T}$, $M\leftrightarrow-M$.

\begin{Theorem}For the shifted Giambelli compatible function $S^M_\lambda$, the Jacobi--Trudi compatibility holds for a fixed integer~$M$,
\begin{gather}
S^M_\lambda=\det\bigl(H^{M+j-1}_{\lambda_i-i+j}\bigr)_{\begin{subarray}{c}1\le i\le L\\1\le j\le L\end{subarray}}.\label{thm}
\end{gather}
\end{Theorem}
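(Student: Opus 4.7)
The plan is to derive the Jacobi--Trudi formula from the shifted Giambelli identity through a sequence of determinant manipulations. Set $H^M_\ell := S^M_{[\ell]}$, the shifted $S$ on the single-row Young diagram $[\ell]$, so that the automorphism $\varphi\colon M\to M+1$ of~\eqref{shift} acts as $\varphi(H^M_\ell)=H^{M+1}_\ell$ as required by the definition of Jacobi--Trudi compatibility. Without loss of generality I focus on $M\ge 0$; the case $M<0$ can then be handled analogously, or via the transpose involution $\lambda\leftrightarrow\lambda^\text{T}$, $M\leftrightarrow -M$ noted before the theorem.

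The first main step is to expand every entry of the Jacobi--Trudi matrix using the shifted Giambelli hypothesis. The entry $H^{M+j-1}_{\lambda_i-i+j}=S^{M+j-1}_{[\lambda_i-i+j]}$ is the value of $S^{M'}_\mu$ at the single-row diagram $\mu=[\lambda_i-i+j]$ with shift $M'=M+j-1$, so by~\eqref{sG} it is itself a determinant of the $S'_{(a|b)}$'s, of size $M+j-1$ or $M+j$ depending on whether $\alpha'_i<0$ or $\alpha'_i\ge 0$. Substituting these expansions into $\det\bigl(H^{M+j-1}_{\lambda_i-i+j}\bigr)_{L\times L}$ exhibits the Jacobi--Trudi right-hand side as a compound determinantal expression purely in the $S'$'s.

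The second main step is to identify this compound expression with the shifted Giambelli determinant~\eqref{sG} for $S^M_\lambda$ itself. Both sides should be read as sums of products of minors of a common ``master'' matrix whose entries are $S'_{(a|b)}$, indexed by arm values $a$ and leg values $b$; the matching is then carried out by Laplace or Cauchy--Binet expansion together with Pl\"ucker-type relations among these minors. The NNSY case $R'=0$ (where the shifted diagonal does not meet $\lambda$) provides a base/template, and the general statement should follow by an induction on $R'$ or on the number of rows $L$, along the lines of the argument used in~\cite{GC} for the Giambelli compatibility. The principal obstacle is the mismatch of matrix sizes --- $L\times L$ for Jacobi--Trudi, $(M+R')\times(M+R')$ for shifted Giambelli, and sizes varying with $(i,j)$ for the inner hook expansions --- so reconciling these dimensions through careful bookkeeping of auxiliary rows/columns, and identifying the precise Pl\"ucker-type relation that collapses the compound expression into the single target determinant, constitutes the heart of the argument.
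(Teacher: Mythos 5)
Your outline points in the right direction---the paper's proof also begins by substituting the shifted Giambelli expansion \eqref{sG} into every entry $H^{M+j-1}_{\lambda_i-i+j}=S^{M+j-1}_{[\lambda_i-i+j]}$ of the Jacobi--Trudi matrix and then matching the resulting compound determinant against the shifted Giambelli determinant for $S^M_\lambda$---but what you defer as ``identifying the precise Pl\"ucker-type relation'' is precisely the content of the proof, and it is not supplied. The missing combinatorial key is the complementary set: writing $(\beta'_1,\dots,\beta'_M)\sqcup(n_L,\dots,n_1)=(M+L-1,\dots,0)$ as in \eqref{complement}, the $(M+j-1)$-shifted leg lengths of the single row $[\lambda_i-i+j]$ are exactly $\{M+j-1,\dots,0\}$ with $n_i$ removed, so each inner determinant is a fixed array of columns $\z_{M+j-1},\dots,\z_0$ with the single column $\z_{n_i}$ deleted. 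This is what reduces the whole identity to the explicit Lemma \eqref{Lemma}, proved by induction on $L$: Laplace expansion along the first column produces minors handled by the induction hypothesis, and the induction closes via the Pl\"ucker relation $\sum_k(-1)^{k-1}|\z_M\cdots\widecheck\z_{n_k}\cdots\z_0|\,|\z_{\beta'_1}\cdots\z_{n_k}\cdots\z_{\beta'_M}|=|\z_{\beta'_1}\cdots\z_{\beta'_M}|\,|\z_M\cdots\z_0|$, itself established by Laplace-expanding a doubled matrix with rectangular zero blocks. The $R'\ne 0$ case is then obtained not by a fresh induction on $R'$ but by applying the $R'=0$ lemma to the fattened diagram $\big[(M+R')^{R'},\lambda_{R'+1},\dots,\lambda_L\big]$ with shift $M+R'$ and specializing the extra vectors $\z_{-1},\dots,\z_{-R'}$ to unit vectors. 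None of these steps is routine bookkeeping, and without them the proposal is a plan rather than a proof.

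Your dispatch of $M<0$ is also incorrect as stated. The involution $\lambda\leftrightarrow\lambda^\text{T}$, $M\leftrightarrow-M$ converts the Jacobi--Trudi identity into the dual N\"agelsbach--Kostka identity in the $E$'s, not into the Jacobi--Trudi identity in the $H$'s at negative $M$; and the case is not simply ``analogous'' either, because for $M<0$ the entries $H^{M+j-1}_{\lambda_i-i+j}$ carry shifts of both signs and the relevant Giambelli form is \eqref{sGnegM}, whose block structure differs. The paper runs a separate induction on $\overline M$ (proving the statement for all $M\ge-\overline M$), peeling off the first column of the Jacobi--Trudi determinant by Laplace expansion so that the minors are governed by the induction hypothesis at shift $-\overline M$, and then verifying the resulting bilinear identity among shifted Giambelli determinants by another zero-block Laplace expansion. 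You would need to supply an argument of this kind for the negative-$M$ half of the theorem.
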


\subsection{Physical application}

The main application we have in mind is the so-called ABJM matrix model,
\begin{gather*}
\langle s_\lambda\rangle_k(N_1,N_2) =\frac{(-1)^{\frac{1}{2}N_1(N_1-1)+\frac{1}{2}N_2(N_2-1)}}{N_1!N_2!}
\int\frac{{\rm d}^{N_1}\mu}{(2\pi)^{N_1}}\frac{{\rm d}^{N_2}\nu}{(2\pi)^{N_2}}e^{\frac{ik}{4\pi}\big(\sum\limits_{m=1}^{N_1}\mu_m^2-\sum\limits_{n=1}^{N_2}\nu_n^2\big)}\nonumber\\
\hphantom{\langle s_\lambda\rangle_k(N_1,N_2) =}{} \times\frac{\prod\limits_{m<m'}^{N_1}\big(2\sinh\frac{\mu_m-\mu_{m'}}{2}\big)^2
\prod\limits_{n<n'}^{N_2}\big(2\sinh\frac{\nu_n-\nu_{n'}}{2}\big)^2}
{\prod\limits_{m=1}^{N_1}\prod\limits_{n=1}^{N_2}\big(2\cosh\frac{\mu_m-\nu_n}{2}\big)^2}s_\lambda\big(e^\mu|e^\nu\big).
\end{gather*}
As reviewed in \cite{PTEP} the hyperbolic functions can be regarded as the trigonometric deformation of the U$(N_1\,|\,N_2)$ invariant measure
\begin{gather}
\frac{\prod\limits_{m<m'}^{N_1}(x_m-x_{m'})^2\prod\limits_{n<n'}^{N_2}(y_n-y_{n'})^2}{\prod\limits_{m=1}^{N_1}\prod\limits_{n=1}^{N_2}(x_m+y_n)^2},\label{invmea}
\end{gather}
with the substitution $x_m=e^{\pm\mu_m}$, $y_n=e^{\pm\nu_n}$ and $s_\lambda(x\,|\,y)$ is the super Schur polynomial, the character of the super unitary group ${\rm U}(N_1\,|\,N_2)$, which shares many interesting properties with the original Schur polynomial such as the Giambelli identity and the Jacobi--Trudi identity~\cite{PT}. Though we introduce the ABJM matrix model as a definition, this matrix model was originally obtained by computing the one-point function of the half-BPS Wilson loop in the ABJM theory \cite{ABJ, ABJM,HLLLP2}, which is the ${\mathcal N}=6$ superconformal Chern--Simons theory describing the worldvolume of the M2-branes. The theory has gauge group ${\rm U}(N_1)_k\times {\rm U}(N_2)_{-k}$ (with the subscripts denoting the Chern--Simons levels) and two pairs of bifundamental matters. Due to the localization techniques, the one-point function originally defined by the infinite-dimensional path integral in the supersymmetric field theory reduces to a finite-dimensional matrix integration~\cite{DT, KWY}.

Without loss of generality hereafter we shall choose $k>0$. After moving to the grand canonical ensemble\footnote{The definition of the vacuum expectation value in the grand canonical ensemble by matching the power of $z$ with one of the arguments is motivated from the study of rank deformations in \cite{MNN} with more complicated gauge groups \cite{MN1,MN3}.} with $M=N_2-N_1$,
\begin{gather*}
\langle s_\lambda\rangle^\text{GC}_{k,M}(z)=\sum_{N=\max(0,-M)}^\infty z^{N}\langle s_\lambda\rangle_k(N,N+M),
\end{gather*}
it was proved in \cite{shifted} that $\langle s_\lambda\rangle^\text{GC}_{k,M}(z)$ is shifted Giambelli compatible \eqref{sG}, \eqref{sGnegM} for $\alpha'_1+\beta'_1+1<k/2$ following the proposal of the Fermi gas formalism~\cite{MP}.
The main steps of the proof in \cite{shifted} are as follows.
\begin{itemize}\itemsep=0pt
\item We rewrite the invariant measure \eqref{invmea} as a product of two determinants, each of which is a combination of the Cauchy determinant and the Vandermonde determinant.
\item We utilize a determinant formula for the super Schur polynomial $s_\lambda(x\,|\,y)$ \cite{MVdJ} which expresses $s_\lambda(x\,|\,y)$ as a ratio of two determinants, with the denominator coinciding with the above Cauchy--Vandermonde determinant.
\item We apply a continuous version of the Cauchy--Binet determinant to combine all the determinants into a single one \cite{shifted} (see also~\cite{PTEP,GC} for a combinatorial viewpoint).
\end{itemize}
Using the property of the shifted Giambelli compatibility, we were able to prove the Giambelli compatibility \cite{HHMO,GC}.

The question whether the Jacobi--Trudi compatibility is satisfied for the shifted Giambelli compatible function, which we have raised in the previous subsection, can be rephrased in the current setup by asking whether the quantum one-point functions of the half-BPS Wilson loop in the ABJM theory satisfy the Jacobi--Trudi identity. In other words, after answering it positively, we find that the Macdonald's ninth variation of the Schur function \cite{M9} defined abstractly by the Jacobi--Trudi compatibility is realized by the path integral.

By applying our theorem in the previous subsection to the ABJM matrix model with the identification of $S_\lambda^M$ as $\langle s_\lambda\rangle^\text{GC}_{k,M}(z)$, or more correctly after the normalization,
\begin{gather*}
S_\lambda^M=\frac{\langle s_\lambda\rangle^\text{GC}_{k,M}(z)} {\langle 1\rangle^\text{GC}_{k,M}(z)},\qquad H_\ell^M=\frac{\langle h_\ell\rangle^\text{GC}_{k,M}(z)} {\langle 1\rangle^\text{GC}_{k,M}(z)},
\end{gather*}
we find that
\begin{gather*}
\frac{\langle s_\lambda\rangle^\text{GC}_{k,M}(z)}{\langle 1\rangle^\text{GC}_{k,M}(z)}
=\det\left(\frac{\langle h_{\lambda_i-i+j}\rangle^\text{GC}_{k,M+j-1}(z)}{\langle 1\rangle^\text{GC}_{k,M+j-1}(z)}\right)
_{\begin{subarray}{c}1\le i\le L\\1\le j\le L\end{subarray}}.
\end{gather*}
As was pointed out in \cite{GC}, the theorem also applies to other matrix models including the Gaussian matrix model and the Chern--Simons matrix model in the canonical ensemble and many cousins of the super Chern--Simons matrix models in the grand canonical ensemble.

Note that although in the original setup $\varphi$ is an automorphism, mapping from a mathematical object to itself bijectively, we apply it under the restriction $\alpha'_1+\beta'_1+1<k/2$, which is essential due to the convergence of the integrations. For this reason our formula is only valid within a~certain range of the Young diagram and we shall take $k$ to be large enough so that the parameters satisfy the bounds.

\section{Proof}

In this section we shall prove the Jacobi--Trudi compatibility out of the shifted Giambelli compatibility \eqref{thm}. We start with $M\ge 0$ since all of the ingredients in the Jacobi--Trudi compati\-bi\-li\-ty~\eqref{thm} have the shift $M+j-1\ge 0$ and the shifted Giambelli compatibility is given in~\eqref{sG}. As in the proof of the Giambelli compatibility \cite{GC} we first consider the simpler case of $R'=0$ and then turn to the general case of $R'\ne 0$. After completing the proof for $M\ge 0$ we can turn to the proof for $M\le 0$. An elegant proof for the case of $M\ge 0$ and $R'=0$ was already given in~\cite{NNSY}. Here we provide a rather pedagogical proof.

\subsection[Case of $M\ge 0$ and $R'=0$]{Case of $\boldsymbol{M\ge 0}$ and $\boldsymbol{R'=0}$}

Our first task would be to translate the relation into an explicit determinant formula. As in the proof of the Giambelli compatibility \cite{GC}, due to the setup $M\ge 0$ and $R'=0$, the arm lengths consist only of the auxiliary ones and the relation is simplified.

We consider the Young diagram $\lambda=(\,|\,\beta'_1,\beta'_2,\dots,\beta'_M)$. Since $\beta'_1=M+L-1$ we define $n_i$ ($n_i>n_{i-1}$) as the complementary set of the leg lengths\footnote{We label $\{n_i\}_{i=1}^L$ inversely because of the relation to the arm lengths \eqref{lambdaarm}.}
\begin{gather}
(\beta'_1,\beta'_2,\dots,\beta'_M)\sqcup(n_L,n_{L-1},\dots,n_1)=(M+L-1,M+L-2,\dots,0).
\label{complement}
\end{gather}
Since the leg length $\{\beta'_l\}_{l=1}^M$ is the distance from the shifted diagonal to the horizontal segment of the boundary of the Young diagram, for the boundary to be successive, the complement $\{n_i\}_{i=1}^L$ is the distance to the vertical segment. This means that the Young diagram can be expressed as $\lambda=[\lambda_1,\lambda_2,\dots,\lambda_L]$ with
\begin{gather}
\lambda_i=M+i-1-n_{i}.\label{lambdaarm}
\end{gather}
Then, the $(M+j-1)$-shifted Frobenius notation for the horizontal Young diagram $[\lambda_i-i+j]$ contains the leg lengths from $M+j-1$ to $0$ except for $n_{i}$, since $(M+j-1)-(\lambda_i-i+j)=n_i$.
After substituting the shifted Giambelli expression \eqref{sG} with the normalization $S^M_\lambda/S^M_\varnothing$ into the Jacobi--Trudi expression \eqref{JT}, we find that the identity we want to prove is given by
\begin{gather*}
 \det\bigl(S'_{(k-M-1\,|\,\beta'_l)}\bigr)_{\begin{subarray}{c}1\le k\le M\\1\le l\le M\end{subarray}}
\prod_{j=2}^{L}\det\bigl(S'_{(k-M-j\,|\,M+j-1-l)}\bigr)
_{\begin{subarray}{c}1\le k\le M+j-1\\1\le l\le M+j-1\end{subarray}}
\nonumber\\
\qquad{} =\det\Big(\det\bigl(S'_{(k-M-j\,|\,M+j-l)}\bigr)
_{\begin{subarray}{c}1\le k\le M+j-1\\1\le l\le M+j,\, l\ne M+j-n_{i}\end{subarray}}\Big)
_{\begin{subarray}{c}1\le i\le L\\1\le j\le L\end{subarray}}.
\end{gather*}
Note that the product on the left-hand side comes from the normalization of the Schur functions.

We shall prove this determinant formula for any value of $S'$. On one hand, since the arm lengths are given trivially for $R'=0$ as we have explained at the beginning of this subsection, we wish to drop them from our notation for simplicity. On the other hand, the matrix size varies for different determinants and we cannot simply forget about the arm lengths. For this reason we introduce a set of infinite-dimensional vectors labelled by $j\in{\mathbb Z}_{\ge 0}$ with the infinite components labelled by $i\in{\mathbb Z}_{>0}$,
\begin{gather*}
(\z_j)_{i}=S'_{(-i\,|\,j)},
\end{gather*}
and define a (finite-dimensional) determinant $| \ |$ for $N$ arrays of infinite-dimensional vectors~$\w_j$ as
\begin{gather}
|\w_1\w_2\cdots\w_N|=\det\bigl((\w_{j})_{N+1-i}\bigr) _{\begin{subarray}{c}1\le i\le N\\1\le j\le N\end{subarray}},\label{findet}
\end{gather}
by choosing the last $N$ components from each vector $\w_j=(\dots,(\w_j)_2,(\w_j)_1)^\text{T}$. Then, the identity can be given as follows.

\begin{Lemma}For any set of infinite-dimensional vectors $(\z_j)_{i}$ labelled by $j\in{\mathbb Z}_{\ge 0}$ and $i\in{\mathbb Z}_{>0}$, the determinant formula
\begin{gather}
|\z_{\beta'_1}\z_{\beta'_2}{\cdots}\z_{\beta'_M}| \prod_{j=2}^{L}|\z_{M{+}j{-}2}\z_{M{+}j{-}3}{\cdots}\z_0| =\det\bigl(|\z_{M{+}j{-}1}\z_{M{+}j{-}2}{\cdots}\widecheck\z_{n_{i}}{\cdots}\z_0|\bigr) _{\begin{subarray}{c}1\le i\le L\\1\le j\le L\end{subarray}},\!\!\!\!\!\!\!\label{Lemma}
\end{gather}
holds. Here $\{n_i\}_{i=1}^L$ is a complementary set of $\{\beta'_l\}_{l=1}^M$ defined in~\eqref{complement} and the determinant $| \ |$ for arrays of infinite-dimensional vectors is defined in~\eqref{findet} by choosing as many components as the arrays from the end of the vectors. Also, $\widecheck\z$ stands for removal of the column $\z$ supplemented with the understanding $|\z_{M+j-1}\z_{M+j-2}\cdots\widecheck\z_{n}\cdots\z_0|=0$ for $n\ge M+j$.
\end{Lemma}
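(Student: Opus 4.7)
The plan is to prove the lemma by induction on $L$. The base case $L=1$ is essentially tautological: the product on the left-hand side is empty and the $1\times 1$ determinant on the right reduces to $|\z_M\cdots\widecheck\z_{n_1}\cdots\z_0|$, which coincides with $|\z_{\beta'_1}\cdots\z_{\beta'_M}|$ because $\{\beta'_l\}_{l=1}^M=\{0,\ldots,M\}\setminus\{n_1\}$ and both listings are taken in decreasing order.

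For the inductive step, I would cofactor-expand the $L\times L$ determinant on the right-hand side of~\eqref{Lemma} along its first column. The entries $A_{i,1}=|\z_M\cdots\widecheck\z_{n_i}\cdots\z_0|$ vanish unless $n_i\le M$, so the expansion is a sum over those $i$ with $n_i\le M$. For each such $i$, the $(L-1)\times(L-1)$ cofactor obtained by deleting the $i$-th row and the first column matches the right-hand side of~\eqref{Lemma} for a reduced problem with parameters $(M+1,L-1)$: the new complementary set is $\{n_{i'}\}_{i'\ne i}$ of size $L-1$, the new $\beta'$-set is $\{\beta'_l\}_{l=1}^M\cup\{n_i\}$ of size $M+1$ (taken in decreasing order), and both lie inside the same range $\{0,\ldots,M+L-1\}$. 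Applying the inductive hypothesis to each cofactor and cancelling the common factor $\prod_{j=3}^L|\z_{M+j-2}\cdots\z_0|$ reduces the claim to the single mixed-size identity
\begin{gather*}
|\z_{\beta'_1}\cdots\z_{\beta'_M}|\cdot|\z_M\cdots\z_0|=\sum_{i\,:\,n_i\le M}(-1)^{i+1}\,|\z_M\cdots\widecheck\z_{n_i}\cdots\z_0|\cdot|\z_{(\{\beta'_l\}_{l=1}^M\cup\{n_i\})_{\text{dec}}}|,
\end{gather*}
relating $M\times M$ minors to $(M+1)\times(M+1)$ minors of the same collection of $\z_j$'s (the subscript ``dec'' denoting the decreasing ordering).

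The heart of the argument, and the main obstacle, is establishing this mixed-size identity. My plan is to Laplace-expand each $(M+1)\times(M+1)$ determinant along its topmost row, which is the row indexed by the $(M+1)$-th component of each vector. Both sides then become polynomials in the quantities $(\z_m)_{M+1}$ with coefficients bilinear in $M\times M$ minors, and matching these coefficients reduces the claim to a family of standard Pl\"ucker relations of the form $\sum_j(-1)^{\text{pos}(j)}\,p_{I\cup\{j\}}\,p_{J\setminus\{j\}}=0$ for index sets $I,J$ of sizes $M-1$ and $M+1$, holding in the Grassmannian formed by the first $M$ components of the $\z_j$'s. The separate analyses of $m\le M$ (where the left-hand side contributes directly) and $m>M$ (where cancellation among the right-hand-side terms is required), together with the intricate sign bookkeeping through the cofactor and Laplace expansions, form the most delicate part of the proof. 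For $M<0$, the parallel argument goes through with the roles of arm and leg lengths exchanged.
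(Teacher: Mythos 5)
Your proposal is correct and follows essentially the same route as the paper: induction on $L$ with $M$ arbitrary, Laplace expansion of the right-hand side of~\eqref{Lemma} along its first column, application of the inductive hypothesis to the cofactors (with the relabelling $M\to M+1$ and $\beta'\to\beta'\cup\{n_k\}$ inside the same range $\{0,\dots,M+L-1\}$), and reduction to the same mixed-size bilinear identity relating the $M$-dimensional minors to the $(M+1)$-dimensional ones. The only divergence is in how that final identity is settled: where you propose to expand each $(M+1)$-dimensional determinant along its top row and match coefficients of $(\z_m)_{M+1}$ against multi-term Pl\"ucker relations (which does go through --- the required sign identities follow from the complementarity of $\{n_i\}_{i=1}^L$ and $\{\beta'_l\}_{l=1}^M$ in $\{0,\dots,M+L-1\}$), the paper instead prepends one auxiliary vector $\z$ so that all determinants have dimension $M+1$, obtains the entire identity at once from the Laplace expansion of a $2(M+1)$-dimensional determinant that vanishes by its rectangular zero block, and then specializes $\z=(1,0,\dots,0)^{\text{T}}$. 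The paper's device is just the standard proof of the Pl\"ucker relation you invoke, but packaged so as to eliminate the coefficient-matching and most of the sign bookkeeping that you flag as the delicate part.
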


We shall prove this equation by induction of $L$. ($M$ can be general.) This equation holds trivially for $L=1$. (The product $\prod\limits_{j=2}^{L}$ on the left-hand side should be interpreted as $1$ since it comes from the normalization.) When this expression holds for $L=\overline L$, for $L=\overline L+1$ by the Laplace expansion along the first column the right-hand side of~\eqref{Lemma} is given by
\begin{gather}
\det\bigl(|\z_{M+j-1}\z_{M+j-2}\cdots\widecheck\z_{n_{i}}\cdots\z_0|\bigr) _{\begin{subarray}{c}1\le i\le\overline L+1\\1\le j\le\overline L+1\end{subarray}}
\label{laplace}\\
\qquad{}=\sum_{k=1}^{\overline L+1}(-1)^{k-1} |\z_{M}\z_{M-1}\cdots\widecheck\z_{n_{k}}\cdots\z_0|
\det\bigl(|\z_{M+j-1}\z_{M+j-2}\cdots\widecheck\z_{n_{i}}\cdots\z_0|\bigr)_{\begin{subarray}{c}1\le i\le\overline L+1,i\ne k\\2\le j\le\overline L+1\end{subarray}}.\nonumber
\end{gather}
If we rename
\begin{gather*}
\overline M=M+1,\qquad
\overline j=j-1,\qquad \overline i=\begin{cases}i,& 1\le i\le k-1 ,\\i-1, & k+1\le i\le\overline L+1,\end{cases}\qquad
\overline n_{\overline i}=n_{i},
\end{gather*}
or in other words, $(\overline n_{\overline L},\overline n_{\overline L-1},\dots,\overline n_1)=(n_{\overline L+1},n_{\overline L},\dots,\widecheck n_{k},\dots,n_1)$, the complicated minor determinant in \eqref{laplace} can be computed by the assumption of the induction
\begin{gather}
 \det\bigl(|\z_{M+j-1}\z_{M+j-2}\cdots\widecheck\z_{n_{i}}\cdots\z_0|\bigr)
_{\begin{subarray}{c}1\le i\le\overline L+1,i\ne k\\2\le j\le\overline L+1\end{subarray}}
=\det\bigl(|\z_{\overline M+\overline j-1}\z_{\overline M+\overline j-2}\cdots
\widecheck\z_{\overline n_{\overline i}}\cdots\z_0|\bigr)
_{\begin{subarray}{c}1\le\overline i\le\overline L\\1\le\overline j\le\overline L\end{subarray}}
\nonumber\\ \qquad {} =|\z_{\overline\beta'_1}\z_{\overline\beta'_2}\cdots\z_{\overline\beta'_{\overline M}}|
\prod_{\overline j=2}^{\overline L}|\z_{\overline M+\overline j-2}\z_{\overline M+\overline j-3}\cdots\z_0|
\nonumber\\ \qquad{} =|\z_{\beta'_1}\z_{\beta'_2}\cdots\z_{n_{k}}\cdots\z_{\beta'_{M}}|
\prod_{j=3}^{\overline L+1}|\z_{M+j-2}\z_{M+j-3}\cdots\z_0|.
\end{gather}
Here in the second equation we have used the assumption of the induction.
Since $\{\overline\beta'_{\overline l}\}_{\overline l=1}^{\overline M}$ is the complement of $\{\overline n_{\overline i}\}_{\overline i=1}^{\overline L}$ with respect to the set of non-negative integers less than $\overline M+\overline L=M+L$, $n_{k}$ which is missing in $\{\overline n_{\overline i}\}_{\overline i=1}^{\overline L}$ has to be included in $\{\overline\beta'_{\overline l}\}_{\overline l=1}^{\overline M}$.
Then, the right-hand side of \eqref{laplace} becomes
\begin{gather*}
 \det\bigl(|\z_{M+j-1}\z_{M+j-2}\cdots\widecheck\z_{n_{i}}\cdots\z_0|\bigr)
_{\begin{subarray}{c}1\le i\le\overline L+1\\1\le j\le\overline L+1\end{subarray}}\\
\quad =\sum_{k=1}^{\overline L+1}(-1)^{k-1}|\z_{M}\z_{M-1}\cdots\widecheck\z_{n_{k}}\cdots\z_0|
|\z_{\beta'_1}\z_{\beta'_2}\cdots\z_{n_{k}}\cdots\z_{\beta'_M}| \prod_{j=3}^{\overline L+1}|\z_{M+j-2}\z_{M+j-3}\cdots\z_0|.
\end{gather*}
To complete the proof we need to show
\begin{gather*}
 \sum_{k=1}^{\overline L+1}(-1)^{k-1}|\z_{M}\z_{M-1}\cdots\widecheck\z_{n_{k}}\cdots\z_0|
|\z_{\beta'_1}\z_{\beta'_2}\cdots\z_{n_{k}}\cdots\z_{\beta'_M}|
\nonumber\\ \qquad {} =|\z_{\beta'_1}\z_{\beta'_2}\cdots\z_{\beta'_M}||\z_{M}\z_{M-1}\cdots\z_0|,
\end{gather*}
where the first determinant and the second one of each term are of dimension $M$ and $M+1$ respectively. Due to this reason, we add one auxiliary vector~$\z$ in front of the first determinant to prove the corresponding identity with all of the determinants of dimension~$M+1$, where the determinant $| \ |$ defined in~\eqref{findet} reduces to the usual one. This is true because of the Laplace expansion along the first $M+1$ rows,
\begin{gather*}
 \left|\begin{matrix}
\z&\!\!\!\z_{\beta'_1}\z_{\beta'_2}\cdots\z_{\beta'_M}&\!\!\!\z_{M}\z_{M-1}\cdots\z_0\\
\z&\!\!\!{\bm 0}\;\;{\bm 0}\;\;\cdots\;\;{\bm 0}&\!\!\!\z_{M}\z_{M-1}\cdots\z_0\end{matrix}
\right|
=|\z\z_{\beta'_1}\z_{\beta'_2}\cdots\z_{\beta'_M}|
|\z_{M}\z_{M-1}\cdots\z_0|\nonumber\\
 \qquad {} +\sum_{k=1}^{\overline L+1}(-1)^{k}|\z\z_{M}\z_{M-1}\cdots\widecheck\z_{n_{k}}\cdots\z_0|
|\z_{\beta'_1}\z_{\beta'_2}\cdots\z_{n_{k}}\cdots\z_{\beta'_M}|,
\end{gather*}
where the matrix of the double size on the left-hand side is vanishing because of the rectangular blocks of zero components. Finally we set $\z=(1,0,\dots,0)^\text{T}\in{\mathbb R}^{M+1}$ to obtain the desired equation.

\subsection[Case of $M\ge 0$ and $R'\ne 0$]{Case of $\boldsymbol{M\ge 0}$ and $\boldsymbol{R'\ne 0}$}

Let us turn to the $R'\ne 0$ case, $\lambda=(\alpha'_1,\alpha'_2,\dots,\alpha'_{R'}\,|\,\beta'_1,\beta'_2,\dots,\beta'_{M+R'})$. As in the case of $R'=0$ we first define a complementary set of leg lengths. Since $\beta'_1=M+L-1$, we can separate the continuous integer sets into two disjoint sets
\begin{gather*}
(M+L-1,M+L-2,\dots,0,-1,-2,\dots,-R') \\
\qquad{} =(\beta'_1,\beta'_2,\dots,\beta'_{M+R'})\sqcup(n_L,n_{L-1},\dots,n_1),
\end{gather*}
with $n_i>n_{i-1}$. Since all of the leg lengths $\{\beta'_l\}_{l=1}^{M+R'}$ are non-negative, $(-1,-2,\dots,-R')$ has to fall into the last $R'$ components of $\{n_i\}_{i=1}^L$
\begin{gather}
n_{i}=-R'-1+i,\qquad 1\le i\le R' .\label{nlarge}
\end{gather}
As in the previous case of $R'=0$, the Young diagram can be reexpressed as $\lambda=[\lambda_1,\lambda_2,\dots,\lambda_L]$ with
\begin{gather*}
\lambda_i=\begin{cases}
M+i+\alpha'_i,&\text{for }1\le i\le R',\\
M+i-1-n_{i},&\text{for }R'+1\le i\le L.
\end{cases}
\end{gather*}
Then, as before, the $(M+j-1)$-shifted Frobenius notation for $[\lambda_{i}-i+j]$ ($R'+1\le i\le L$) in the lower block contains the leg lengths from $M+j-1$ to $0$ except for $n_{i}$. For this case the determinant formula we need to prove is
\begin{gather}
 \det\begin{pmatrix}
\bigl(S'_{(k-M-1|\beta'_l)}\bigr)_{\begin{subarray}{c}1\le k\le M\\1\le l\le M+R'\end{subarray}}\\
\bigl(S'_{(\alpha'_k|\beta'_l)}\bigr)_{\begin{subarray}{c}1\le k\le R'\\1\le l\le M+R'\end{subarray}}
\end{pmatrix}
\prod_{j=2}^{L}\det\bigl(S'_{(k-M-j|M+j-1-l)}\bigr)
_{\begin{subarray}{c}1\le k\le M+j-1\\1\le l\le M+j-1\end{subarray}}\nonumber\\ \qquad
{} =\det\begin{pmatrix}
\left(\det\begin{pmatrix}
\bigl(S'_{(k-M-j|M+j-l)}\bigr)
_{\begin{subarray}{c}1\le k\le M+j-1\\1\le l\le M+j\end{subarray}}\\
\bigl(S'_{(\alpha'_i|M+j-l)}\bigr)
_{\begin{subarray}{c}1\le l\le M+j\end{subarray}}
\end{pmatrix}\right)
_{\begin{subarray}{c}1\le i\le R'\\1\le j\le L\end{subarray}}\\
\left(\det\bigl(S'_{(k-M-j|M+j-l)}\bigr)
_{\begin{subarray}{c}1\le k\le M+j-1\\1\le l\le M+j,l\ne M+j-n_{i}\end{subarray}}\right)
_{\begin{subarray}{c}R'+1\le i\le L\\1\le j\le L\end{subarray}}
\end{pmatrix}. \label{Rne0}
\end{gather}

We can prove this determinant formula by applying the previous lemma tentatively to another Young diagram and specializing it afterwards. Namely, for the Young diagram
\begin{gather*}
\big[(M+R')^{R'},\lambda_{R'+1},\lambda_{R'+2},\dots,\lambda_{L}\big]=(\,|\,\beta'_1+R',\beta'_2+R',\dots,\beta'_{M+R'}+R'),
\end{gather*}
the determinant formula we have proved in \eqref{Lemma} for the shift $M+R'$ is
\begin{gather*}
 |\z_{\beta'_1+R'}\z_{\beta'_2+R'}\cdots\z_{\beta'_{M+R'}+R'}|
\prod_{j=2}^{L}|\z_{M+j-2+R'}\z_{M+j-3+R'}\cdots\z_0| \\
\qquad{} =\det\bigl(|\z_{M+j-1+R'}\z_{M+j-2+R'}\cdots\widecheck\z_{n_{i}+R'}\cdots\z_0|\bigr)
_{\begin{subarray}{c}1\le i\le L\\1\le j\le L\end{subarray}},
\end{gather*}
where $\{n_i+R'\}_{i=1}^L$ is the complement of $\{\beta'_l+R'\}_{l=1}^{M+R'}$ with respect to the set of non-negative integers less than $M+L+R'$. We are free to rename the vectors by subtracting all of the labels by~$R'$
\begin{gather*}
 |\z_{\beta'_1}\z_{\beta'_2}\cdots\z_{\beta'_{M+R'}}|
\prod_{j=2}^{L}|\z_{M+j-2}\z_{M+j-3}\cdots\z_0\z_{-1}\cdots\z_{-R'}| \\
 \qquad {}=\det\begin{pmatrix}
\bigl(|\z_{M+j-1}\z_{M+j-2}\cdots\z_0\z_{-1}\cdots\widecheck\z_{n_{i}}\cdots\z_{-R'}|\bigr)
_{\begin{subarray}{c}1\le i\le R'\\1\le j\le L\end{subarray}}\\
\bigl(|\z_{M+j-1}\z_{M+j-2}\cdots\widecheck\z_{n_{i}}\cdots\z_0\z_{-1}\cdots\z_{-R'}|\bigr)
_{\begin{subarray}{c}R'+1\le i\le L\\1\le j\le L\end{subarray}}
\end{pmatrix},
\end{gather*}
and identifying the components of the infinite-dimensional vectors $(\z_{j})_i$ as
\begin{gather}
(\z_{j\ge 0})_i=\begin{cases}S'_{(R'-i\,|\,j)},&\text{for }R'+1\le i,\\
S'_{(\alpha'_{R'+1-i}\,|\,j)},&\text{for }1\le i\le R',
\end{cases}
\qquad
(\z_{-j<0})_i=\begin{cases}1,&\text{for }i=R'+1-j,\\
0,&\text{otherwise}.
\end{cases}\label{special}
\end{gather}
Then, the determinants on the left-hand side and those in the lower block on the right-hand side reduce to those we want to prove in \eqref{Rne0}, since the insertion of $\z_{-i}$ means that the row with the arm length~$\alpha'_i$ is eliminated. For the upper block, on the other hand, due to the removal of the column $\z_{n_{i}}=\z_{-R'-1+i}$ \eqref{nlarge} from the determinant, we do not eliminate the arm length~$\alpha'_{R'+1-i}$.
Hence the determinant becomes
\begin{gather}
 |\z_{M+j-1}\z_{M+j-2}\cdots\z_0\z_{-1}\cdots\widecheck\z_{n_{i}}\cdots\z_{-R'}|
\nonumber\\ \qquad{} =(-1)^{R'-i}
\det\begin{pmatrix}
\bigl(S'_{(k-M-j|M+j-l)}\bigr)_{\begin{subarray}{c}1\le k\le M+j-1\\1\le l\le M+j\end{subarray}}\\
\bigl(S'_{(\alpha'_{R'+1-i}|M+j-l)}\bigr)
_{\begin{subarray}{c}1\le l\le M+j\end{subarray}}
\end{pmatrix}, \label{lowerblock}
\end{gather}
with the specification \eqref{special}, where the arm lengths reduce to
\begin{gather*}
\big(\dots,-2,-1,\alpha'_{R'+1-i}\big)^\text{T},
\end{gather*}
and the sign $(-1)^{R'-i}$ appears due to the change of rows. Note that compared with \eqref{Rne0}, the arm lengths in \eqref{lowerblock} appear in the reverse order. To change back to the original order in~\eqref{Rne0} in the determinant, we need to take care of the signs $(-1)^{\sum\limits_{i=1}^{R'-1}i}$. Finally, we find that the total signs cancel
\begin{gather*}
(-1)^{\sum\limits_{i=1}^{R'-1}i}(-1)^{\sum\limits_{i=1}^{R'}(R'-i)}=1,
\end{gather*}
and the determinant reduces to the desired expression \eqref{Rne0}.

\subsection[Case of $M\le 0$]{Case of $\boldsymbol{M\le 0}$}

After completing the proof for $M\ge 0$, let us turn to $M\le 0$. We shall prove the state\-ment~\eqref{thm} by proving it for $M\ge-\overline M$ in the induction of $\overline M$. ($L$ can be general.)

First of all, from the previous two subsections, the statement is true for $M\ge 0$. Suppose this is true for $M\ge-\overline M$, let us prove the case of $M=-(\overline M+1)$,
\begin{gather}
S_\lambda^{-(\overline M+1)} =\det\bigl(H_{\lambda_i-i+j}^{-(\overline M+1)+j-1}\bigr)
_{\begin{subarray}{c}1\le i\le L\\1\le j\le L\end{subarray}}, \label{M+1}
\end{gather}
for the Young diagram
\begin{gather*}
\lambda=[\lambda_1,\lambda_2,\dots,\lambda_L] =\big(\alpha'_1,\alpha'_2,\dots,\alpha'_{\overline M+1+R'}\,|\,\beta'_1,\beta'_2,\dots,\beta'_{R'}\big),
\end{gather*}
with the identification $\alpha'_i=\lambda_i-i+\overline M+1$ for $1\le i\le\overline M+1+R'$. We first Laplace-expand the right-hand side of~\eqref{M+1} along the first column
\begin{gather}
\det\bigl(H_{\lambda_i-i+j}^{-(\overline M+1)+j-1}\bigr)
_{\begin{subarray}{c}1\le i\le L\\1\le j\le L\end{subarray}}
=\sum_{k=1}^L(-1)^{k-1}H_{\lambda_k-k+1}^{-(\overline M+1)}
\det\bigl(H_{\lambda_i-i+j}^{-(\overline M+1)+j-1}\bigr)
_{\begin{subarray}{c}1\le i\le L,i\ne k\\2\le j\le L\end{subarray}}. \label{Laplaceexp}
\end{gather}
If we rename the various variables by
\begin{gather*}
\overline j=j-1,\qquad \overline i=\begin{cases}i,&\text{for }1\le i\le k-1,\\
i-1,&\text{for }k+1\le i\le L,\end{cases} \qquad
\overline\lambda_{\overline i}
=\begin{cases}\lambda_{i}+1,&\text{for }1\le i\le k-1,\\
\lambda_{i},&\text{for }k+1\le i\le L,\end{cases}
\end{gather*}
which implies $\overline\lambda=[\lambda_1+1,\lambda_2+1,\dots,\lambda_{k-1}+1,\lambda_{k+1},\lambda_{k+2},\dots,\lambda_{L}]$, since $\overline\lambda_{\overline i}-\overline i+\overline j=\lambda_i-i+j$, the determinant on the right-hand side of \eqref{Laplaceexp} can be expressed as
\begin{gather*}
 \det\bigl(H_{\lambda_i-i+j}^{-(\overline M+1)+j-1}\bigr)
_{\begin{subarray}{c}1\le i\le L,i\ne k\\2\le j\le L\end{subarray}}
=\det\bigl(H_{\overline\lambda_{\overline i}-\overline i+\overline j}
^{-\overline M+\overline j-1}\bigr)
_{\begin{subarray}{c}1\le\overline i\le L-1\\1\le\overline j\le L-1\end{subarray}}\\
\qquad{} =S_{\overline\lambda}^{-\overline M}
=S_{[\lambda_1+1,\dots,\lambda_{k-1}+1,\lambda_{k+1},\dots,\lambda_{L}]}^{-\overline M},
\end{gather*}
where in the second equation we have used the assumption of the induction. Note that although the summation is taken up to $L$, since $H^M_\ell=0$ for $\ell<0$ we can truncate the summation within $\lambda_k-k+1\ge 0$.
Finally, the identity we want to prove using the shifted Giambelli compatibility~is
\begin{gather}
S_\lambda^{-(\overline M+1)} =\sum_{k=1}^{L}(-1)^{k-1}H_{\lambda_k-k+1}^{-(\overline M+1)}
S_{[\lambda_1+1,\dots,\lambda_{k-1}+1,\lambda_{k+1},\dots,\lambda_{L}]}^{-\overline M}.\label{trunc}
\end{gather}
Since the shifted Giambelli compatibility is given by the $M$-shifted Frobenius notation, we first express all of the Young diagrams in this notation
\begin{gather}
 \det\begin{pmatrix}\bigl(S'_{(\alpha'_i|-\overline M-2+j)}\bigr)
_{\begin{subarray}{c}1\le i\le\overline M+1+R'\\1\le j\le\overline M+1\end{subarray}}&
\bigl(S'_{(\alpha'_i|\beta'_j)}\bigr)
_{\begin{subarray}{c}1\le i\le\overline M+1+R'\\1\le j\le R'\end{subarray}}
\end{pmatrix}
\det\begin{pmatrix}
\bigl(S'_{(\overline M-i|-\overline M-1+j)}\bigr)
_{\begin{subarray}{c}1\le i\le\overline M\\1\le j\le\overline M\end{subarray}}
\end{pmatrix}
\nonumber\\
 \quad{} =\sum_{k=1}^{L}(-1)^{k-1}\det\begin{pmatrix}
\bigl(S'_{(\alpha'_k|-\overline M-2+j)}\bigr)
_{\begin{subarray}{c}1\le j\le\overline M+1\end{subarray}}\\
\bigl(S'_{(\overline M-i|-\overline M-2+j)}\bigr)
_{\begin{subarray}{c}1\le i\le\overline M\\1\le j\le\overline M+1\end{subarray}}
\end{pmatrix}
\nonumber\\
 \qquad{} \times\det\begin{pmatrix}
\bigl(S'_{(\alpha'_i|-\overline M-1+j)}\bigr)
_{\begin{subarray}{c}1\le i\le\overline M+1+R',i\ne k\\1\le j\le\overline M\end{subarray}}&
\bigl(S'_{(\alpha'_i|\beta'_j)}\bigr)
_{\begin{subarray}{c}1\le i\le\overline M+1+R',i\ne k\\1\le j\le R'\end{subarray}}
\end{pmatrix}.\label{SS=SS}
\end{gather}
This can be obtained by the expansion of the determinant
\begin{gather}
\det\begin{pmatrix}
S'_{(\alpha'_i|-\overline M-1)}
&(S'_{(\alpha'_i|-\overline M-2+j)})
_{\begin{subarray}{c}2\le j\le\overline M+1\end{subarray}}
&(S'_{(\alpha'_i|\beta'_j)})
_{\begin{subarray}{c}1\le j\le R'\end{subarray}}
&(S'_{(\alpha'_i|-\overline M-1+j)})
_{\begin{subarray}{c}1\le j\le\overline M\end{subarray}}\\
S'_{(\overline M-i|-\overline M-1)}
&{\bm 0}_{\begin{subarray}{c}2\le j\le\overline M+1\end{subarray}}
&{\bm 0}_{\begin{subarray}{c}1\le j\le R'\end{subarray}}
&(S'_{(\overline M-i|-\overline M-1+j)})
_{\begin{subarray}{c}1\le j\le\overline M\end{subarray}}
\end{pmatrix}
\nonumber\\
\quad{} =(-1)^{(\overline M+R')\times\overline M}\nonumber\\
\qquad{} \times
\det\begin{pmatrix}
(S'_{(\alpha'_i|-\overline M-2+j)})
_{\begin{subarray}{c}1\le j\le\overline M+1\end{subarray}}
&(S'_{(\alpha'_i|-\overline M-1+j)})
_{\begin{subarray}{c}1\le j\le\overline M\end{subarray}}
&(S'_{(\alpha'_i|\beta'_j)})
_{\begin{subarray}{c}1\le j\le R'\end{subarray}}\\
(S'_{(\overline M-i|-\overline M-2+j)})
_{\begin{subarray}{c}1\le j\le\overline M+1\end{subarray}}
&{\bm 0}_{\begin{subarray}{c}1\le j\le\overline M\end{subarray}}
&{\bm 0}_{\begin{subarray}{c}1\le j\le R'\end{subarray}}
\end{pmatrix},\label{SSSS}
\end{gather}
where \looseness=1 for the upper blocks the column indices run over $1\le i\le\overline M+1+R'$ while for the lower blocks the column indices run over $1\le i\le\overline M$. This equation is obtained by exchan\-ging the $\overline M+R'$ column vectors in the second and third blocks with the $\overline M$ column vectors in the fourth block. On one hand, for the left-hand side, if we eliminate the fourth upper block $\bigl(S'_{(\alpha'_i|-\overline M-1+j)}\bigr)_{\begin{subarray}{c}1\le j\le\overline M\end{subarray}}$ from the elementary column operations, the determinant reduces to the left-hand side of~\eqref{SS=SS}. On the other hand if we Laplace-expand the right-hand side of~\eqref{SSSS} along the first $\overline M+1$ columns, we obtain the right-hand side of~\eqref{SS=SS}. In the Laplace expansion, the remaining determinant would be vanishing unless we choose the last $\overline M$ row vectors $\bigl(S'_{(\overline M-i|-\overline M-2+j)}\bigr) _{\begin{subarray}{c}1\le j\le\overline M+1\end{subarray}}$ in the first lower block. To avoid choosing the same row vectors resulting in the vanishing determinant, the remaining row vector has to satisfy $\alpha'_i\ge M$, which means $\alpha_i-i+1\ge 0$, the same truncation in the summation as in~\eqref{trunc}. Hence, finally the same statement is true for $M\ge-(\overline M+1)$. This completes the proof by induction.

\section{Conclusions}

In this paper we prove that the Jacobi--Trudi identity with a shift of the background holds for the one-point functions of the half-BPS Wilson loop in the ABJM matrix model. As noted in~\cite{GC} for the proof of the Giambelli identity, there are many cousins of this matrix model and the proof of the Jacobi--Trudi identity is generalized directly to these matrix models as well. For example we can replace the super unitary group by the super orthosymplectic group studied in~\cite{H,MN5,MS1,MS2,O} which is originally coming from the ${\mathcal N}=5$ super Chern--Simons theories~\cite{ABJ, HLLLP2}. Also we can apply the results here to the matrix models \cite{HHO,MNN,MN1,MN2,MN3,MNY} coming from the ${\mathcal N}=4$ super Chern--Simons theories. These applications are understood directly by the main theorem summarized in this paper.

Our main motivation of proving the Jacobi--Trudi identity is as follows. It was conjectured that the partition function of the ABJM theory is described by the free energy of the closed topological string theory \cite{DMP,FHM,HMMO,HMO3,HMO2,MP} and the one-point function of the half-BPS Wilson loop in the ABJM theory is described by the free energy of the open topological string theory~\cite{HHMO}. Despite the well-established conjecture, it has been difficult to prove it. The Jacobi--Trudi identity proved in this paper, the Giambelli identity proved in~\cite{GC} and the open-closed duality~\cite{HO,KM} strongly suggest the structure of integrable hierarchy \cite{AKLTZ,KNS,Sato} on the matrix model side. Combining with the integrable structure on the topological string side \cite{ADKMV,BGT,GHM2}, we hope that this structure is helpful in proving the conjecture.

\subsection*{Acknowledgements}
We are especially grateful to Soichi Okada for raising the question discussed in this paper clearly, Yasuhiko Yamada and Sintarou Yanagida for many valuable discussions and instructive comments.
We would also like to thank Heng-Yu Chen, Balog Janos, Naotaka Kubo, Satsuki Matsuno, Masatoshi Noumi, Junji Suzuki, Akihiro Tsuchiya and Marcus Werner for valuable discussions.
The work of S.M.\ is supported by JSPS Grant-in-Aid for Scientific Research (C) \#26400245.

\pdfbookmark[1]{References}{ref}
\LastPageEnding

\end{document}